\renewcommand
\DeclareMathAlphabet{\mathpzc}{OT1}{pzc}{m}{it}
\newcommand\independent{\protect\mathpalette{\protect\independenT}{\perp}}
\def\independenT#1#2{\mathrel{\rlap{$#1#2$}\mkern2mu{#1#2}}}
\def\Expect{\mathbb{{E}}}
\newtheorem{theorem}{Theorem}%[chapter]
\newtheorem{definition}[theorem]{Definition}%[chapter]
\newtheorem{lemma}[theorem]{Lemma}%[chapter]
\theoremstyle{remark}
\newcommand \address[1]{\gdef \@address{#1}}
\long\def\@footnotetext#1{\insert\footins{\def\baselinestretch{1.2}\footnotesize
\interlinepenalty\interfootnotelinepenalty
\splittopskip\footnotesep \splitmaxdepth \dp\strutbox
\floatingpenalty \@MM \hsize\columnwidth \@parboxrestore
\edef\@currentlabel{\csname
p@footnote\endcsname\@thefnmark}\@makefntext
{\rule{\z@}{\footnotesep}\ignorespaces #1\strut}}}
\long\def\symbolfootnote[#1]#2{\begingroup%
\def\thefootnote{\fnsymbol{footnote}}\footnote[#1]{#2}\endgroup}
\providecommand{\keywords}[1]{\textit{Keywords:} #1}
\def\maketitle{%
  \null
  \thispagestyle{empty}%
  \begin{center}\leavevmode
    \normalfont
    {\LARGE \bf \@title\par}%
    {\normalsize \@author\par}%
    \vskip 0.05 cm
  %{\normalsize \it \@affil\par}%
    \vskip 0.05cm
    {\normalsize \@date\par}%
  \end{center}%
}
\newcommand{\institute}[1]{\newcommand{\@institute}{#1}}
\renewcommand\textsl{\textcolor{blue}}
\begin{document}
\title{Multiply robust dose-response estimation for multivalued causal inference problems}
\author[1]{Cian Naik}
\author[1]{Emma J. McCoy}
\author[2]{Daniel J. Graham}
\affil[1]{Department of Mathematics, Imperial College London, London, UK}
\affil[2]{Corresponding author: Department of Civil Engineering, Imperial College London, London, SW7 2AZ, UK. Email: \texttt{d.j.graham@imperial.ac.uk}} 
\date{}

\maketitle
\begin{abstract}
This paper develops a multiply robust (MR) dose-response estimator for causal inference problems involving multivalued treatments. We combine a family of generalised propensity score (GPS) models and a family of outcome regression (OR) models to achieve an average potential outcomes estimator that is consistent if just one of the GPS or OR models in each family is correctly specified. We provide proofs and simulations that demonstrate multiple robustness in the context of multivalued causal inference problems. 
\end{abstract}
\keywords{Multiply robust; causal inference; multivalued treatment; propensity score; outcome regression}.

\section{Introduction}

In recent years a great deal of attention has been devoted to doubly robust (DR) estimators for causal inference which combine propensity score (PS) and outcome regression (OR) models to give an average treatment effect (ATE) estimator that is consistent and asymptotically normal under correct specification of just one of the two component models. The key advantage of the DR approach is that it affords the analyst two chances at correct inference. DR estimators were first introduced by \citet{Scharfstein/et/al:1999} and have been studied extensively in theory and application in the binary treatment setting \citep[e.g.][]{Robins:2000,Robins/et/al:2000c,Robins/Rotnitzky:2001,VanDerLaan/Robins:2003,Lunceford/Davidian:2004,Bang/Robins:2005,Kang/Schafer:2007}. A small number of papers have formulated DR models for dose-response estimation \citep[e.g.][]{Flores/Mitnik:2009,Zhang/et/al:2012,Graham/et/al:2016}.

The recent missing data literature has developed multiply robust (MR) estimation approaches, which consistently estimate a population mean, given ignorable missingness, if just one of multiple combined models is correctly specified. Since causal inference methods essentially treat unobserved potential outcomes as missing data, there are opportunities for translation of missing data methods in the causal inference setting. In this paper we generalise and apply the binary MR missing data estimation approach to address causal inference problems for dose-response estimation involving multivalued treatments. 

In so doing we aim to make two main contributions to the causal inference literature. First, we advance methods for dose-response estimation based on the generalised propensity score (GPS) \citep[e.g.][]{Imbens:2000,Hirano/Imbens:2004}, which have been understudied relative to binary treatment problems. Second, we contribute to the exiting literature on robust modelling for causal inference by demonstrating how an MR model can be formulated and applied in the causal setting.  

Building on the work of \citet{Han/Wang:2013}, \citet{Han:2014a} and \citet{Han:2014b} we formulate our MR approach by combining a family of GPS models, $\mathcal{P}$, and a family of OR models, $\mathcal{A}$, to construct an average potential outcome (APO) estimator that is consistent if either $\mathcal{P}$ contains a correctly specified PS model or $\mathcal{A}$ contains a correctly specified OR model. We provide proofs of MR properties for APO estimation of multivalued treatments and we present simulations which demonstrate that the approach is robust to problems of confounding and functional form misspecification which are routinely encountered in causal analyses.   

The paper is structured as follows. Section two briefly outlines the theory of doubly robust estimation. Section three defined the concept of multiply robust estimation in the context of causal inference problems involving multivalued treatments, provides proofs of MR properties for estimation of average potential outcomes (APOs), and proposes an algorithm for numerical implementation. Simulation results are then presented in section four. Conclusions are drawn in the final section.     

\section{Doubly Robust Estimation}

Causal inference models are designed to estimate the effect of a treatment on an outcome of interest. Typically, the set up is one in which the treatment is not randomly assigned, and consequently, confounding may be present. Let $z_i = (y_i,d_i,x_i)$, $i=(1,...,n)$, be a random vector of observed data available for causal modelling, where for the $i$-th unit of observation $y_i$ denotes an outcome (or response), $d_i$ the treatment received, and $x_i$ a vector of pre-treatment covariates that represent confounding. The treatment could be binary, taking values in $\mathcal{D} \in\{0,1\}$; or multivalued, with $Q$ treatment categories $\mathcal{D}\equiv(d_1,d_2,...,d_Q)$. We use lower case $d_q$ to denote specific potential, as distinct from observed, treatments.  

For every level of treatment $d_q$ we can define a potential outcome, $Y_i(d_q)$, with $\mathcal{Y}_i=\{Y_i(d_q): d_q \in \mathcal{D}\}$ denoting the full set of potential outcomes. The estimands of interest in causal inference studies include average potential outcomes (APOs), $APO=\Expect[Y(d_q)]$, which measure the average response across the population under treatment level $D_i=d_q$; or average treatment effects (ATEs), $ATE=\Expect[Y(d_q)-Y(0)]$, which measure the difference in average response that would occur under treatment levels $D_i=d_q$ and $D_i=0$. 

To estimate a `causal' APO or ATE we must adjust for relevant unit level confounding characteristics in order to capture the marginal effect of the treatment. Identification of causal estimands requires three key assumptions to hold. First, responses must be conditionally independent of treatment assignment given observed covariates; i.e. $(Y_i(0),Y_i(1)) \independent I_{1}(D_i)|X_i,$ in the case of binary treatments, and $Y_i(d_q) \independent  I_{d_q}(D_i)|X_i$ for all $d_q \in \mathcal{D}$ in the case of multivalued treatments; where $I_{d_q}(D_i)$ is the indicator function for receiving treatment level $d_q$. Second, conditional on covariates $X_i$, the probability of assignment to treatment must be strictly positive for all $x$ and $d$, or at least this must hold within some region of treatment, $\mathcal{C}\subseteq \mathcal{D}$ say, referred to as the common support region. Third, the relationship between observed and potential outcomes must comply with the Stable Unit Treatment Value Assumption (SUTVA)\citep[e.g.][]{Rubin:1978}, which requires that the observed response under a given treatment allocation is equivalent to the potential response under that treatment allocation. For binary treatments the SUTVA implies $Y_i=I_{1}(D_i)Y_i(1)+ (1-I_{1}(D_i))Y_i(0)$ for all $i=1,...,N$, and for multivalued treatments $Y_i \equiv I_{d_q}(D_i)Y_i(d_q)$ for all $d_q \in \mathcal{D}$, for all $Y_i (d_q) \in \mathcal{Y}_i$, and for $i=1,...,N$. 

If these three conditions hold then the estimands of interest are identified and estimation may proceed via a number of different approaches \citep[for a review see][]{Imbens/Rubin:2015}. Two are of particular interest in this paper. First, we can proceed by positing an outcome regression (OR) model $\Psi^{-1}\{m(X_i,D_i;\beta)\}$ for the mean response $\Expect(Y_i|D_i,X_i)$, where $\Psi$ is a known link function, $m()$ is a regression function, and $\beta$ is an unknown parameter vector. If the OR model is correctly specified APOs and ATEs can be consistently estimated by averaging over the covariates for given values of $d_q$. Second, we can assume a model for $f_{D|X}(d_i|x_i)$, the conditional density of the treatment given the covariates, and use this model to estimate propensity scores (PSs), which we denote $\widehat{\pi}(D_i|X_i; \widehat{\alpha})$ for unknown parameter vector $\alpha$. PS weighting estimators of the form attributed to \citet{Horovitz/Thompson:1952} can then be used to estimate ATEs consistently if the PS model is correctly specified.
 
Doubly robust (DR) estimation combine OR and PS models in such a way that consistent estimates of APOs and ATEs can be obtained if just one of two component models are correctly specified. This can be achieved by weighting the OR model with a function of the inverse PS values. Using the notation of \citet{Lunceford/Davidian:2004},  the DR estimator for binary treatment APOs can be written
\begin{align} \label{eq1}
\hat{\mu}_{DR}(d_q)=\frac{1}{n}\sum_{i=1}^n\bigg[\frac{I_{d_q}(D_i)Y_i}{h(d_q|X_i;\alpha)}-\frac{I_{d_q}(D_i)-h(d_q|X_i;\alpha)}{h(d_q|X_i;\alpha)}\Psi^{-1}\{m(X_i,d_q;\hat{\beta})\}\bigg]
\end{align}
where 
\begin{align*}
h(I_{d_q}(D_i)|X_i;\alpha)=I_{d_q}(D_i)\hat{\pi}(D_i|X_i;\hat{\alpha})+[1-I_{d_q}(D_i)]\{1-\hat{\pi}(D_i|X_i;\hat{\alpha})\}
\end{align*}
A DR estimator for the ATE follows as $\hat{\tau}_{DR}(1)=\hat{\mu}_{DR}(1)-\hat{\mu}_{DR}(0)$ \citep[for a proof see][]{Lunceford/Davidian:2004}. 

For multivalued treatments, we can extend the binary approach as follows 
\begin{align} \label{eq2}
\hat{\mu}_{DR}(d_q)=\frac{1}{n}\sum_{i=1}^n\bigg[\frac{I_{d_q}(D_i)Y_i}{\hat{\pi}(d_q|X_i;\hat{\alpha})}-\frac{I_{d_q}(D_i)-\hat{\pi}(d_q|X_i;\hat{\alpha})}{\hat{\pi}(d_q|X_i;\hat{\alpha})}\Psi^{-1}\{m(X_i,d_q;\hat{\beta})\}\bigg]
\end{align}
Where $\hat{\pi}(d_q|X_i;\hat{\alpha})$ is a model for 
\begin{align*}
\pi(d_q|X_i)=\mathbb{P}(D_i=d_q|X_i)=\mathbb{E}(I_{d_q}(D_i)|X_i)
\end{align*}
which we form by assuming a p.m.f. $f_{D|X}(d_q|x_i,\alpha)$ for the GPS $\pi(d_q|x_i)$, and estimating the parameter $\hat{\alpha}$ from a regression model using the observed treatment doses $D_i$ and covariates $X_i$ \citep[for details see][]{Imbens:2000}. A proofs of the DR properties of the multivalued dose-response estimator shown in equation (\ref{eq2}), which to our knowledge has not previously appeared  in the literature, is given in the appendix. 

\section{A multiply robust dose-response estimator}
Having discussed DR estimators, the question naturally arises as to whether estimators can be constructed which are `more' than DR, and what this `multiple' robustness would entail. A number of recent papers have explored the issue of multiple robustness in the context of missing data problems including \citet{Chan:2013} and \citet{Chan/Yam:2014}. Here we build on the work of \citet{Han/Wang:2013}, \citet{Han:2014a} and \citet{Han:2014b} which develop the empirical likelihood based approach of \citet{Qin/Zhang:2007} to construct an MR estimation approach for missing data that consistently estimate a population mean given ignorable missingness. We generalise this work to the multivalued setting and derive a type of MR estimator for causal inference dose-response estimation problems in the following sense.
\begin{definition}(Multiple-robustness) We postulate multiple PS models $\mathcal{P}=\{\pi^j(x;\alpha^j):j=1,...,J\}$ and multiple OR models $\mathcal{A}=\{a^k(x,d;\beta^k):k=1,...K\}$. An estimator of an ATE or APO combining all of these models is multiply robust if it is consistent when either $\mathcal{P}$ contains a correctly specified model for the propensity score or $\mathcal{A}$ contains a correctly specified outcome regression model. 
\end{definition}
We develop our approach for multivalued treatments noting that results for binary treatments follow directly as a special case.

For a multivalued treatment in a causal inference problem, we denote the number of observed responses with treatment level $d_q$ as $m_q$ with $\mathcal{M}_q=\{i : D_i=d_q\}$, $q=1,...,Q$ for $Q$ levels of treatment, and $m_q=\big| \mathcal{M}_q\big|$. We specify a family of PS models $\mathcal{P}=\{\pi^j(x;\alpha^j):j=1,...,J\}$ to estimate $\alpha$ from $f_{D|X}(d|x_i,\alpha)$, using the observed treatment doses $D_i$ and covariates $X_i$. From this we calculate estimates $\hat{\pi}^j(d|X_i;\hat{\alpha}^j)$ for $\pi(d|X_i;\alpha)$. We next postulate a family of OR models $\mathcal{A}=\{a^k(x,d;\beta^k):k=1,...K\}$. Using $\mathcal{P}$ and $\mathcal{A}$ we can now define our MR estimator for a multivalued treatment.

\begin{definition}(Multiply-Robust estimator for multivalued treatment) For possible treatment levels $d_1, ..., d_Q$ we define $\hat{\mu}_{MR}(d_q)$ by
\begin{align}
\hat{\mu}_{MR}(d_q)=\sum_{i \in  \mathcal{M}_q}\hat{w}_{i,d_q}Y_i
\end{align} 
We define the $\hat{w}_{i,d_q}$ as follows:
\begin{enumerate}
\item Postulate multiple OR models $\mathcal{A}=\{a^k(x,d;\beta^k):k=1,...,K\}$ for $a(d,x)$ and estimate the parameters $\hat{\beta}^k$ from the regression coefficients.
\item Assume a p.m.f. $f_{D|X}(d|x_i,\alpha)$ for the GPS and postulate multiple PS models $\mathcal{P}=\{\pi^j(x,d;\alpha^j):j=1,...,J\}$ to estimate the parameters $\hat{\alpha}^j$.  From this, calculate the estimates $\hat{\pi}^j(d_q|X_i;\hat{\alpha}^j)$ for $j=1,...,J$. We can think of this more simply as postulating a family $\mathcal{P'}=\{\hat{\pi}^j(d_q|X_i;\alpha^j):j=1,...,J\}$ of models for $\pi(d_q|x)$, and then estimating the parameters $\hat{\alpha}^j$ from the data.
\item Define $\hat{\theta}^j=\frac{1}{n}\sum_{i=1}^n\hat{\pi}^j(d_q|X_i;\hat{\alpha}^j)$ and $\hat{\eta}^k=\frac{1}{n}\sum_{i=1}^na^k(X_i,d_q;\hat{\beta}^k)$
\item Letting $\hat{\alpha}^T=\{(\hat{\alpha}^1)^T,...,(\hat{\alpha}^j)^T\}$ and $\hat{\beta}^T=\{(\hat{\beta}^1)^T,...,(\hat{\beta}^K)^T\}$, define
\begin{align}\label{eqD20}
\begin{split}
\hat{g}_i(\hat{\alpha},\hat{\beta})=\{&\hat{\pi}^1(d_q|X_i;\hat{\alpha}^1)-\hat{\theta}^1,...,\hat{\pi}^j(d_q|X_i;\hat{\alpha}^j)-\hat{\theta}^j,\\
&a^1(X_i,d_q;\hat{\beta}^1)-\hat{\eta}^1,..., a^K(X_i,d_q;\hat{\beta}^K)-\hat{\eta}^K  \}^T
\end{split}
\end{align}
\item For $i\in \mathcal{M}_q$, set
\begin{align}\label{eqD21}
\hat{w}_{i,d_q}=\frac{1}{m_q}\frac{1}{1+\hat{\rho}^T\hat{g}_i(\hat{\alpha},\hat{\beta})}\bigg/\bigg\{\frac{1}{m_q}\sum_{i\in\mathcal{M}_q}\frac{1}{1+\hat{\rho}^T\hat{g}_i(\hat{\alpha},\hat{\beta})}\bigg\}
\end{align} 
Where $\hat{\rho}^T=(\hat{\rho}_1,...,\hat{\rho}_{J+K})$ is a $(J+K)$-dimensional vector satisfying:
\begin{align}\label{eqD22}
\sum_{i\in\mathcal{M}_q}\frac{\hat{g}_i(\hat{\alpha},\hat{\beta})}{1+\hat{\rho}^T\hat{g}_i(\hat{\alpha},\hat{\beta})}=0
\end{align}
\end{enumerate}
\end{definition}
\begin{theorem}\label{ThmMR}
The MR estimator has the multiply-robust property for estimating the APOs $\mu(d_q)$ in the multivalued treatment case.
\end{theorem}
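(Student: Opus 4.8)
\emph{Proof strategy.}\; The plan is to pass to probability limits and show that $\hat\mu_{MR}(d_q)\to\mu(d_q)$ in probability whenever the union model holds, treating the two cases --- $\mathcal{P}$ contains a correct GPS model, or $\mathcal{A}$ contains a correct OR model --- in turn. Under the usual regularity conditions for $M$-estimation of the nuisance parameters, $\hat\alpha^{j}\to\alpha^{j*}$ and $\hat\beta^{k}\to\beta^{k*}$ (the respective pseudo-true values), so that $\hat\theta^{j}\to\theta^{j*}:=\Expect[\pi^{j}(d_q|X;\alpha^{j*})]$, $\hat\eta^{k}\to\eta^{k*}:=\Expect[a^{k}(X,d_q;\beta^{k*})]$, and $\hat g_i(\hat\alpha,\hat\beta)\to g_i:=g_i(\alpha^{*},\beta^{*})$ uniformly; also $m_q/n\to p_q:=\mathbb{P}(D=d_q)>0$ by positivity. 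The first substantive observation is that, because $m_q^{-1}\sum_{i\in\mathcal{M}_q}(\cdot)$ is a within-stratum average (over $\{D_i=d_q\}$), the defining relation \eqref{eqD22} is the empirical counterpart of $\Expect[\,g(\alpha^{*},\beta^{*})/(1+\rho^{T}g(\alpha^{*},\beta^{*}))\mid D=d_q\,]=0$, whose root $\rho^{*}$ --- unique on $\{\rho:1+\rho^{T}g>0\text{ a.s.}\}$ by concavity of $\rho\mapsto\Expect[\log(1+\rho^{T}g)\mid D=d_q]$ --- is the limit of $\hat\rho$. Rewriting \eqref{eqD21} as a ratio,
\begin{align*}
\hat\mu_{MR}(d_q)
=\frac{m_q^{-1}\sum_{i\in\mathcal{M}_q}Y_i/(1+\hat\rho^{T}\hat g_i)}{m_q^{-1}\sum_{i\in\mathcal{M}_q}1/(1+\hat\rho^{T}\hat g_i)}
\ \longrightarrow\
\frac{\Expect[\,Y/(1+(\rho^{*})^{T}g)\mid D=d_q\,]}{\Expect[\,1/(1+(\rho^{*})^{T}g)\mid D=d_q\,]}.
\end{align*}

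Suppose first that some $\pi^{j_0}\in\mathcal{P}$ is correctly specified, so $\pi^{j_0}(d_q|X;\alpha^{j_0*})=\pi(d_q|X)$ is the true GPS and $\theta^{j_0*}=p_q$. I would then exhibit $\rho^{*}$ explicitly: take the vector with $p_q^{-1}$ in coordinate $j_0$ and zeros elsewhere, which gives $1+(\rho^{*})^{T}g=\pi(d_q|X)/p_q>0$ a.s.\ under positivity. Substituting this into the population estimating equation and using $\Expect[I_{d_q}(D)\mid X]=\pi(d_q|X)$ via iterated expectations shows every coordinate --- both the centred GPS terms $\pi^{j}(d_q|X)-\theta^{j*}$ and the centred OR terms $a^{k}(X,d_q)-\eta^{k*}$ --- has mean zero, so this is indeed the (unique) root. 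Feeding $1+(\rho^{*})^{T}g=\pi(d_q|X)/p_q$ into the displayed ratio and again invoking $\Expect[I_{d_q}(D)\mid X]=\pi(d_q|X)$ together with SUTVA ($I_{d_q}(D)Y=I_{d_q}(D)Y(d_q)$) and unconfoundedness yields $\Expect[I_{d_q}(D)Y/\pi(d_q|X)]\big/\Expect[I_{d_q}(D)/\pi(d_q|X)]=\Expect[Y(d_q)]/1=\mu(d_q)$.

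Suppose instead that some $a^{k_0}\in\mathcal{A}$ is correctly specified, so the pseudo-true value satisfies $a^{k_0}(X,d_q;\beta^{k_0*})=\Expect[Y\mid D=d_q,X]=\Expect[Y(d_q)\mid X]$ and $\hat\eta^{k_0}\to\Expect[\Expect[Y(d_q)\mid X]]=\mu(d_q)$. Here I would use the calibration identity $\sum_{i\in\mathcal{M}_q}\hat w_{i,d_q}\,\hat g_i(\hat\alpha,\hat\beta)=0$, which is immediate from \eqref{eqD21}--\eqref{eqD22}; its $(J+k_0)$-th coordinate gives $\sum_{i\in\mathcal{M}_q}\hat w_{i,d_q}\,a^{k_0}(X_i,d_q;\hat\beta^{k_0})=\hat\eta^{k_0}$, whence
\begin{align*}
\hat\mu_{MR}(d_q)-\hat\eta^{k_0}=\sum_{i\in\mathcal{M}_q}\hat w_{i,d_q}\big\{Y_i-a^{k_0}(X_i,d_q;\hat\beta^{k_0})\big\}.
\end{align*}
Crucially, $\hat g_i$ --- and hence the normalised weight attached to unit $i$ --- is a function of $X_i$ (and sample-wide quantities) only, not of $Y_i$; so the probability limit of the right-hand side is $\Expect[\,\{Y-\Expect[Y(d_q)\mid X]\}/(1+(\rho^{*})^{T}g)\mid D=d_q\,]$ divided by the positive constant $\Expect[\,1/(1+(\rho^{*})^{T}g)\mid D=d_q\,]$, and conditioning the numerator on $X$ and using SUTVA and unconfoundedness makes the inner expectation $\Expect[\,Y-\Expect[Y(d_q)\mid X]\mid X,D=d_q\,]=0$. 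Thus $\hat\mu_{MR}(d_q)-\hat\eta^{k_0}\to0$, so $\hat\mu_{MR}(d_q)\to\mu(d_q)$. Combining the two cases gives the multiply-robust property (the binary-treatment statement being the case $Q=2$).

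I expect the genuine work to lie in the Lagrange-multiplier step rather than in the two case analyses. One must argue that $\hat\rho$ exists --- equivalently that $0$ lies in the interior of the convex hull of $\{\hat g_i:i\in\mathcal{M}_q\}$ with probability tending to one, which is not automatic since the centring constants $\hat\theta^{j},\hat\eta^{k}$ are full-sample, not within-stratum, averages --- that the population equation has a unique root $\rho^{*}$, and, so that the weights are positive and all the ratios above have denominators bounded away from zero, that $1+(\rho^{*})^{T}g>0$ almost surely. As in \citet{Han/Wang:2013}, these follow from positivity/overlap, boundedness and moment conditions on the GPS and OR model families, and a non-degeneracy (linear-independence) condition on the coordinates of $g$; the remaining pieces are standard uniform laws of large numbers and continuous-mapping arguments.
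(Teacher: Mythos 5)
Your proposal is correct and follows essentially the same route as the paper: a two-case analysis in which, for a correct GPS model, the weights reduce asymptotically to normalised inverse-probability weights (your explicit $\rho^{*}$, with $p_q^{-1}$ in the $j_0$-th coordinate and zeros elsewhere, is exactly the probability limit of the paper's $\hat{\rho}^{\pi}=\big((\hat{\lambda}_1+1)/\hat{\theta}^1,\hat{\lambda}_2/\hat{\theta}^1,\dots\big)$ since $\hat{\lambda}=O_p(n^{-1/2})$ and $\hat{\theta}^1\to p_q$), and, for a correct OR model, the calibration identity $\sum_{i}\hat{w}_{i,d_q}\hat{g}_i=0$ recentres the estimator on $\hat{\eta}^{k_0}\to\mu(d_q)$. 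The only material differences are presentational: you bypass the empirical-likelihood/Lagrange-multiplier construction the paper uses to exhibit $\hat{\rho}^{\pi}$, and you defer (but correctly flag) the existence and root-selection issues for \eqref{eqD22}, which the paper likewise postpones to the convex-minimisation argument for Theorem \ref{ThmMR_New}.
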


We start by considering the case in which $\mathcal{P}$ contains a correctly specified PS model, say $\pi^1(d|x;\alpha^1)$ without loss of generality. Let $\alpha_0^1$ denote the true value of $\alpha^1$ such that $\pi^1(d|x;\alpha^1_0)=\pi(d|x)$, and let $\hat{\alpha}^1$ be its estimator that we fit from our PS model. 

Following \citet{Han/Wang:2013}, we define the empirical probability of $(Y_i,X_i)$ conditional on $D_i=d_q$ to be $p_i$ for $i=1,...,m_q$. Here, we index the subjects with $D_i=d_q$ by $1,...,m_q$ without loss of generality. Conditional on $D=d_q$ we then have the following lemma 
\begin{lemma}\label{lemma4}
\begin{align}
\begin{split}
&\mathbb{E}\bigg[\frac{\pi^j(d_q|X;\alpha^j)-\mathbb{E}\{\pi^j(d_q|X;\alpha^j)\}}{\pi(d_q|X)}\bigg|D=d_q\bigg]=0\ \ \ \ \ \ \ \ \ \ j=1,...,J,\\
&\mathbb{E}\bigg[\frac{a^k(X,d_q;\beta^k)-\mathbb{E}\{a^k(X,d_q;\beta^k)\}}{\pi(d_q|X)}\bigg|D=d_q\bigg]=0\ \ \ \ \ \ \ \ \ \ k=1,...,K
\end{split}
\end{align}
\end{lemma}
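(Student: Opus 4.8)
Both displayed identities are instances of a single change-of-measure fact: for \emph{any} integrable function $b(X)$ of the covariates,
\begin{align*}
\mathbb{E}\left[\frac{b(X)}{\pi(d_q|X)}\,\Big|\,D=d_q\right]=\frac{\mathbb{E}[b(X)]}{\mathbb{P}(D=d_q)}.
\end{align*}
Granting this, the plan is simply to take $b(X)=\pi^j(d_q|X;\alpha^j)-\mathbb{E}\{\pi^j(d_q|X;\alpha^j)\}$ for the first identity and $b(X)=a^k(X,d_q;\beta^k)-\mathbb{E}\{a^k(X,d_q;\beta^k)\}$ for the second. In each case the unconditional mean $\mathbb{E}[b(X)]$ is zero by construction, so the right-hand side vanishes and the claim follows for every $j=1,\dots,J$ and $k=1,\dots,K$ at once.

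To establish the change-of-measure fact I would first write the conditional expectation given the event $\{D=d_q\}$ as a ratio, $\mathbb{E}[\,\cdot\mid D=d_q]=\mathbb{E}[\,\cdot\, I_{d_q}(D)]/\mathbb{P}(D=d_q)$, which is legitimate because $\mathbb{P}(D=d_q)>0$. Then I would condition the numerator on $X$ and apply the tower property together with $\mathbb{E}[I_{d_q}(D)\mid X]=\pi(d_q|X)$ (the defining property of the GPS stated in Section~2); the factor $\pi(d_q|X)$ cancels the $1/\pi(d_q|X)$ inside the expectation, leaving $\mathbb{E}[b(X)]/\mathbb{P}(D=d_q)$. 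That is the entire argument.

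The only points that need care are measure-theoretic rather than substantive. First, $1/\pi(d_q|X)$ must be well defined: this is guaranteed by the overlap/common-support assumption ($\pi(d_q|X)>0$ almost surely, and bounded away from $0$ on the support region if one wants $b(X)/\pi(d_q|X)$ manifestly integrable, in conjunction with the assumed integrability of the candidate model functions $\pi^j$ and $a^k$). Second, it is worth emphasising that the argument never uses correct specification of any $\pi^j$ or $a^k$ --- the identity holds for arbitrary sufficiently integrable functions of $X$, with the \emph{true} GPS $\pi(d_q|X)$ in the denominator. That is precisely the feature exploited downstream: it shows that each coordinate of $\hat g_i$ in \eqref{eqD20}, reweighted by the inverse true GPS, has conditional mean zero given $D=d_q$, so that $\rho=0$ solves the population analogue of the estimating equation \eqref{eqD22} and thereby anchors the empirical-likelihood weights irrespective of which models are correctly specified. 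I do not anticipate any genuine obstacle here; the work is confined to stating the integrability hypotheses cleanly and invoking iterated expectations.
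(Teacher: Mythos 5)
Your proposal is correct and follows essentially the same route as the paper: both arguments rest on the tower property with $\mathbb{E}[I_{d_q}(D)\mid X]=\pi(d_q|X)$ cancelling the inverse true GPS, the identification of the conditional expectation given $D=d_q$ with $\mathbb{E}[\,\cdot\,I_{d_q}(D)]/\mathbb{P}(D=d_q)$, the positivity assumption $\mathbb{P}(D=d_q)>0$, and the fact that the centred model functions have mean zero by construction. Your packaging of this as a single change-of-measure identity applied to an arbitrary mean-zero $b(X)$ is a tidier presentation of the same computation the paper carries out for $\pi^j$ and then declares analogous for $a^k$.
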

\begin{proof}The case of $\pi^j(d_q|X;\alpha^j)$ for arbitrary $j$ follows from the fact that
\begin{align*}
0&=\mathbb{E}\bigg(\frac{I_{d_q}(D)}{\pi(d_q|X)}[\pi^j(d_q|X;\alpha^j)-\mathbb{E}\{\pi^j(d_q|X;\alpha^j)\}]\bigg)\\
&=\mathbb{P}(I_{d_q}(D)=1)\mathbb{E}\bigg(\frac{I_{d_q}(D)}{\pi(d_q|X)}\left[\pi^j(d_q|X;\alpha^j)-\mathbb{E}\{\pi^j(d_q|X;\alpha^j)\}\right]\bigg|I_{d_q}(D)=1\bigg)\\
&\ \ \ \ \ +\mathbb{P}(I_{d_q}(D)=0)\mathbb{E}\bigg(\frac{I_{d_q}(D)}{\pi(d_q|X)}\left[\pi^j(d_q|X;\alpha^j)-\mathbb{E}\{\pi^j(d_q|X;\alpha^j)\}\right]\bigg|I_{d_q}(D)=0\bigg)\\
&=\mathbb{P}(I_{d_q}(D)=1)\mathbb{E}\bigg(\frac{\pi^j(d_q|X;\alpha^j)-\mathbb{E}\{\pi^j(d_q|X;\alpha^j)\}}{\pi(d_q|X)}\bigg|I_{d_q}(D)=1\bigg)\\
&=\mathbb{P}(D=d_q)\mathbb{E}\bigg(\frac{\pi^j(d_q|X;\alpha^j)-\mathbb{E}\{\pi^j(d_q|X;\alpha^j)\}}{\pi(d_q|X)}\bigg|D=d_q\bigg)\\
\end{align*}
This then proves the lemma, because the probabilistic assignment condition described in Section two implies that $\mathbb{P}(D=d_q)>0$. The case of $a^k(X,d_q;\beta^k)$ with arbitrary $k$ follows from a similar argument.
\end{proof}
We then proceed as in the original proof. Lemma \ref{lemma4} implies that the most plausible value for $p_i$ should be defined through the constrained optimization problem
\begin{align}\label{eqLM}
\underset{p_1,...,p_{m_q}}{\max}\prod_{i=1}^{m_q}p_i\ \ \ {\rm subject\ to}\ \ \ &p_i\geq0,\ (i=1,...,m_q),\ \ \ \sum_{i=1}^{m_q} p_i=1 \nonumber \\
&\sum_{i=1}^{m_q} p_i\frac{\pi^j(d_q|X_i;\alpha^j)-\hat{\theta}^j}{\pi^1(d_q|X_i;\alpha^1)}=0 \nonumber \\
&\sum_{i=1}^{m_q} p_i\frac{a^k(X_i,d_q;\beta^k)-\hat{\eta}^k}{\pi^1(d_q|X_i;\alpha^1)}=0
\end{align}
The first two constraints make the $p_i$ empirical probabilities, and the last two are the empirical versions of the statements of Lemma \ref{lemma4}, using the fact that $\pi^1(d_q|X;\alpha^1)$ is correctly specified.

Applying Lagrange multipliers gives
\begin{align*}
\hat{p}_i=\frac{1}{m_q}\frac{1}{1+\hat{\lambda}^T\hat{g}_i(\hat{\alpha},\hat{\beta})/\pi^1(d_q|X_i;\alpha^1)}
\end{align*}
Where $\hat{\lambda}^T=(\hat{\lambda}_1,...,\hat{\lambda}_{J+K})$ is a $(J+K)$-dimensional vector satisfying:
\begin{align}\label{eqlam}
\sum_{i=1}^{m_q}\frac{\hat{g}_i(\hat{\alpha},\hat{\beta})/\pi^1(d_q|X_i;\alpha^1)}{1+\hat{\lambda}^T\hat{g}_i(\hat{\alpha},\hat{\beta})/\pi^1(d_q|X_i;\alpha^1)}=0
\end{align}
Comparing this to (\ref{eqD22}), we see that a solution to (\ref{eqD22}) is given by $\hat{\rho}^{\pi}$ with $\hat{\rho}^{\pi}_1=(\hat{\lambda}_1+1)/\hat{\theta}^1$ and $\hat{\rho}^{\pi}_l=\hat{\lambda}_l/\hat{\theta}^1$ for $l=2,...,J+K$. Inserting this into (\ref{eqD21}) gives 
\begin{align}\label{eqWEIGHT}
\hat{w}_{i,d_q}=\frac{1}{m_q}\frac{\hat{\theta}^1/\pi^1(d_q|X_i;\alpha^1)}{1+\hat{\lambda}^T\hat{g}_i(\hat{\alpha},\hat{\beta})/\pi^1(d_q|X_i;\alpha^1)}\bigg/\bigg\{\frac{1}{m_q}\sum_{i=1}^{m_q}\frac{\hat{\theta}^1/\pi^1(d_q|X_i;\alpha^1)}{1+\hat{\lambda}^T\hat{g}_i(\hat{\alpha},\hat{\beta})/\pi^1(d_q|X_i;\alpha^1)}\bigg\}
\end{align}
By the definition of $\hat{p}_i$, we then have that (when $\hat{w}_{i,d_q}$ is given by (\ref{eqD21}) and $\hat{\rho}=\hat{\rho}^{\pi}$)
\begin{align*}
\hat{w}_{i,d_q}=\frac{\hat{p}_i\hat{\theta}^1}{\pi^1(d_q|X_i;\alpha^1)}
\bigg/\sum_{i=1}^{m_q}\frac{\hat{p}_i\hat{\theta}^1}{\pi^1(d_q|X_i;\alpha^1)}
\end{align*}
Using (\ref{eqLM}), we can show that $\sum_{i=1}^{m_q}\frac{\hat{p}_i\hat{\theta}^1}{\pi^1(d_q|X_i;\alpha^1)}=1$. Thus $\hat{w}_{i,d_q}=\hat{p}_i\hat{\theta}^1/\pi^1(d_q|X_i;\alpha^1)$ and our estimator becomes 
\begin{align*}
\hat{\mu}_{MR}(d_q)=\sum_{i=1}^n\frac{I_{d_q}(D_i)\hat{p}_i\hat{\theta}^1Y_i}{\pi^1(d_q|X_i;\alpha^1)}
\end{align*}

It is apparent that the weights $\hat{w}_{i,d_q}$ are positive. We also know as before that for $j=1,...,J$ and $k=1,...,K$, $\hat{\alpha}^j\to \alpha^j_*$ and $\hat{\beta}^k\to \beta^k_*$ in probability Here, $ \alpha^j_*$ and $\beta^k_*$ are the least false values that minimize the corresponding Kullback-Leibler distance between the probability distribution based on the postulated model and the one that generates the data. Since we assumed without loss of generality that $\pi^1(d_q|X_i;\alpha^1)$ was correctly specified, $\alpha_*^1=\alpha_0^1$. Furthermore, $\hat{\theta}^j \to \theta_*^j$ and $\hat{\eta}^k \to \eta_*^k$ in probability under suitable regularity conditions, where $ \theta_*^j=\mathbb{E}\{\pi^j(d_q|X;\alpha_*^j)\}$ and $\eta_*^k=\mathbb{E}\{a^k(X,d_q;\beta^k_*)\}$.

\citet{Han/Wang:2013} show that for their model, $\hat{\lambda}=O_p(n^{-1/2})$. Since we have found $\hat{\lambda}$ in the same way, this holds true for our model as well. Thus, since $\hat{w}_{i,d_q}$ is given by (\ref{eqWEIGHT}), simple algebra gives 
\begin{align*}
\hat{\mu}_{MR}(d_q)=\sum_{i=1}^n\frac{I_{d_q}(D_i)Y_i}{\pi(d_q|X_i)}+o_p(1)
\end{align*}
which tends to $\mu(d_q)$ in probability because $\pi(d_q|X)=\mathbb{E}(I_{d_q}(D)|X)$. 

Hence we have the following result:
\begin{lemma}\label{LmodP}
When $\mathcal{P}$ contains a correctly specified model for $\pi(d|x)$ and (\ref{eqD21}) is evaluated at $\hat{\rho} = \hat{\rho}^{\pi}$, $\hat{\mu}_{MR}(d_q)$ is a consistent estimator of $\mu({d_q})$ as $n \to \infty$
\end{lemma}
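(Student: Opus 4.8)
The plan is to assemble the facts established in the paragraphs above into a single clean argument, so the work is mostly organisational. First I would record the empirical-likelihood set-up: conditioning on $D=d_q$, Lemma~\ref{lemma4} furnishes $J+K$ exact mean-zero moment conditions; the maximum empirical likelihood estimate of the conditional law of $(Y,X)$ then solves the constrained program (\ref{eqLM}); and a Lagrange-multiplier computation gives $\hat{p}_i = m_q^{-1}\{1+\hat{\lambda}^{T}\hat{g}_i(\hat{\alpha},\hat{\beta})/\pi^1(d_q|X_i;\hat{\alpha}^1)\}^{-1}$ with $\hat{\lambda}$ pinned down by (\ref{eqlam}).

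Second, I would make the change of variables between (\ref{eqlam}) and the defining relation (\ref{eqD22}) explicit: setting $\hat{\rho}^{\pi}_1=(\hat{\lambda}_1+1)/\hat{\theta}^1$ and $\hat{\rho}^{\pi}_l=\hat{\lambda}_l/\hat{\theta}^1$ for $l\ge 2$, direct substitution shows $\hat{\rho}^{\pi}$ solves (\ref{eqD22}); inserting $\hat{\rho}=\hat{\rho}^{\pi}$ into (\ref{eqD21}) and simplifying produces the form (\ref{eqWEIGHT}), and using the unit-sum constraint in (\ref{eqLM}) together with $\hat{\theta}^1 = n^{-1}\sum_i \hat{\pi}^1(d_q|X_i;\hat{\alpha}^1)$ one checks the normalising denominator equals $1$, whence $\hat{w}_{i,d_q}=\hat{p}_i\hat{\theta}^1/\pi^1(d_q|X_i;\hat{\alpha}^1)$. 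In particular the weights are nonnegative, and the positivity assumption of Section two keeps $\pi^1$ bounded away from $0$, so $\hat{\mu}_{MR}(d_q)$ is well defined.

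Third comes the asymptotics. Standard M-estimation regularity gives $\hat{\alpha}^j\to\alpha^j_*$ and $\hat{\beta}^k\to\beta^k_*$ in probability, with $\alpha^1_*=\alpha^1_0$ since $\pi^1$ is correctly specified, and $\hat{\theta}^j\to\theta^j_*$, $\hat{\eta}^k\to\eta^k_*$; the quantitative input, imported from \citet{Han/Wang:2013} because $\hat{\lambda}$ solves a formally identical estimating equation, is $\hat{\lambda}=O_p(n^{-1/2})$. A first-order expansion of $\{1+\hat{\lambda}^{T}\hat{g}_i/\pi^1\}^{-1}$ in $\hat{\lambda}$, uniform over $i\in\mathcal{M}_q$, then collapses the correction term and yields
\begin{align*}
\hat{\mu}_{MR}(d_q)=\sum_{i=1}^n\frac{I_{d_q}(D_i)Y_i}{\pi(d_q|X_i)}+o_p(1).
\end{align*}
Since $\pi(d_q|X)=\mathbb{E}\{I_{d_q}(D)\mid X\}$, the leading term is the Horvitz--Thompson estimator of $\mu(d_q)$, which converges to $\mu(d_q)=\mathbb{E}\{Y(d_q)\}$ in probability by the law of large numbers under the ignorability, positivity and SUTVA assumptions of Section two; this gives the claim.

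The step I expect to be the real obstacle is making that $o_p(1)$ rigorous with estimated nuisances: one must bound $\hat{\lambda}^{T}\hat{g}_i(\hat{\alpha},\hat{\beta})/\pi^1(d_q|X_i;\hat{\alpha}^1)$ uniformly over $i\in\mathcal{M}_q$, which needs (i) the positivity bound on $\pi^1$, (ii) envelope/moment conditions on the components of $\hat{g}_i$ so the empirical averages arising after expansion converge, and (iii) solvability of (\ref{eqlam}) with $\hat{\lambda}$ interior for large $n$ --- and it is exactly here that the \emph{exact} mean-zero property of Lemma~\ref{lemma4}, which rests on the correct specification of $\pi^1$, is indispensable. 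All of this parallels \citet{Han/Wang:2013}, so the only genuinely new verification is that Lemma~\ref{lemma4}, rather than its missing-data counterpart, supplies the required moment conditions --- and the proof of that lemma already does so.
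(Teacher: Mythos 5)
Your proposal is correct and follows essentially the same route as the paper's own argument: the empirical-likelihood program built from the moment conditions of Lemma~\ref{lemma4}, the change of variables identifying $\hat{\rho}^{\pi}$ from $\hat{\lambda}$, the reduction of the weights to $\hat{p}_i\hat{\theta}^1/\pi^1(d_q|X_i;\hat{\alpha}^1)$, and the $\hat{\lambda}=O_p(n^{-1/2})$ expansion reducing $\hat{\mu}_{MR}(d_q)$ to the inverse-probability-weighted form. The only remark worth adding is that the displayed limit should carry a factor $n^{-1}$ in front of the sum for the Horvitz--Thompson interpretation to be literal, a normalisation that is absorbed by $\hat{\theta}^1/m_q \to 1/n$ in the weights and is omitted in the same way in the paper.
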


We now consider the case that $\mathcal{A}$ contains a correctly specified OR model, say $a^1(x,d;\beta^1)$ without loss of generality. Let $\beta_0^1$ denote the true value of $\beta^1$ such that $a^1(x,d;\beta^1_0)=a(x,d)$. To prove the consistency of $\hat{\mu}_{MR}(d_q)$ in this case, suppose that there exists a solution $\hat{\rho}^a$ to (\ref{eqD22}) which has a finite probability limit $\rho_*^a$ and that (\ref{eqD21}) is evaluated under $\hat{\rho}^a$. By the definition of $\hat{w}_{i,d_q}$ we have that, after some algebra
\begin{align*}
\hat{\mu}_{MR}(d_q)&=\sum_{i\in \mathcal{M}_q}\hat{w}_{i,d_q}\{Y_i-a^1(X_i,d_q;\hat{\beta}^1)\}+\frac{1}{n}\sum_{i=1}^na^1(X_i,d_q;\hat{\beta}^1)\\
&=\sum_{i\in \mathcal{M}_q}\frac{\frac{1}{m_q}\frac{1}{1+\hat{\rho}^T\hat{g}_i(\hat{\alpha},\hat{\beta})}}{\frac{1}{m_q}\sum_{i\in\mathcal{M}_q}\frac{1}{1+\hat{\rho}^T\hat{g}_i(\hat{\alpha},\hat{\beta})}}\{Y_i-a^1(X_i,d_q;\hat{\beta}^1)\}+\frac{1}{n}\sum_{i=1}^na^1(X_i,d_q;\hat{\beta}^1)\\
&=\sum_{i=1}^n\frac{\frac{I_{d_q}(D_i)}{1+\hat{\rho}^T\hat{g}_i(\hat{\alpha},\hat{\beta})}}{\sum_{i=1}^n\frac{I_{d_q}(D_i)}{1+\hat{\rho}^T\hat{g}_i(\hat{\alpha},\hat{\beta})}}\{Y_i-a^1(X_i,d_q;\hat{\beta}^1)\}+\frac{1}{n}\sum_{i=1}^na^1(X_i,d_q;\hat{\beta}^1)\\
&=\bigg[\sum_{i=1}^n\frac{I_{d_q}(D_i)\{Y_i-a^1(X_i,d_q;\hat{\beta}^1)\}}{1+\hat{\rho}^T\hat{g}_i(\hat{\alpha},\hat{\beta})}\bigg]\bigg/\bigg[\sum_{i=1}^n\frac{I_{d_q}(D_i)}{1+\hat{\rho}^T\hat{g}_i(\hat{\alpha},\hat{\beta})}\bigg]+\frac{1}{n}\sum_{i=1}^na^1(X_i,d_q;\hat{\beta}^1)\\
\end{align*}
Then, since $a^1(x,d;\beta^1)$ is the correct OR model, under suitable regularity conditions we have that the final line tends in probability to 
\begin{align}\label{eqAmod}
\overset{p}{\to}\mathbb{E}\bigg[\frac{I_{d_q}(D)\{Y-a(X,d_q)\}}{1+(\rho^a_*)^Tg(\alpha_*,\beta_*;X)}\bigg]\bigg/\mathbb{E}\bigg[\frac{I_{d_q}(D)}{1+(\rho^a_*)^Tg(\alpha_*,\beta_*;X)}\bigg]+\mu(d_q)
\end{align}
where 
\begin{align}
\begin{split}
g(\alpha_*,\beta_*;X)=\{&\hat{\pi}^1(d_q|X;\alpha_*^1)-\theta_*^1,...,\hat{\pi}^j(d_q|X;\alpha_*^j)-\theta_*^j,\\
&a^1(X,d_q;\beta_*^1)-\eta_*^1,..., a^K(X,d_q;\beta_*^K)-\eta_*^K  \}^T
\end{split}
\end{align}
Recalling that $I_{d_q}(D)Y=Y(d_q)$ and thus
\begin{align*}
I_{d_q}(D)\{Y-a(X,d_q)\}=I_{d_q}(D)\{Y(d_q)-a(X,d_q)\}=0
\end{align*}
Hence (\ref{eqAmod}) is equal to $\mu(d_q)$ and so 
\begin{align*}
\hat{\mu}_{MR}(d_q)\overset{p}{\to}\mu(d_q)
\end{align*}
Hence, we have the following result:
\begin{lemma}\label{LmodA}
When $\mathcal{A}$ contains a correctly specified model for $a(x,d)$ and (\ref{eqD21}) is evaluated at $\hat{\rho} = \hat{\rho}^a$ , where $\hat{\rho}^a$ solves (\ref{eqD22}) and has a finite probability limit, $\hat{\mu}_{MR}(d_q)$ is a consistent estimator of $\mu(d_q)$ as $n \to \infty$.
\end{lemma}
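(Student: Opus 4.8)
The plan is to put $\hat{\mu}_{MR}(d_q)$ into augmented inverse-probability-weighted form and then show that the augmentation term vanishes asymptotically once $a^1$ is correctly specified. The first ingredient is a calibration identity for the weights: since $\hat{\rho}^a$ solves (\ref{eqD22}) and $\hat{w}_{i,d_q}$ is proportional to $(1+\hat{\rho}^T\hat{g}_i(\hat{\alpha},\hat{\beta}))^{-1}$ with $\sum_{i\in\mathcal{M}_q}\hat{w}_{i,d_q}=1$, one has $\sum_{i\in\mathcal{M}_q}\hat{w}_{i,d_q}\hat{g}_i(\hat{\alpha},\hat{\beta})=0$; reading off the coordinate of $\hat{g}_i$ attached to $a^1$ yields $\sum_{i\in\mathcal{M}_q}\hat{w}_{i,d_q}\,a^1(X_i,d_q;\hat{\beta}^1)=\hat{\eta}^1=\frac{1}{n}\sum_{i=1}^n a^1(X_i,d_q;\hat{\beta}^1)$. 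Subtracting and re-adding $a^1(X_i,d_q;\hat{\beta}^1)$ in $\sum_{i\in\mathcal{M}_q}\hat{w}_{i,d_q}Y_i$ and using this identity gives
\[
\hat{\mu}_{MR}(d_q)=\sum_{i\in\mathcal{M}_q}\hat{w}_{i,d_q}\{Y_i-a^1(X_i,d_q;\hat{\beta}^1)\}+\frac{1}{n}\sum_{i=1}^n a^1(X_i,d_q;\hat{\beta}^1),
\]
and writing the sums over $\mathcal{M}_q$ as sums over $i=1,\dots,n$ with the indicators $I_{d_q}(D_i)$ turns the first term into the ratio of two $n^{-1}$-scaled sums appearing just above (\ref{eqAmod}).

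Next I would take probability limits term by term. For the second term, $\hat{\beta}^1\to\beta_0^1$ (its least-false value is the truth since $a^1$ is correct), so by the law of large numbers it tends to $\mathbb{E}[a(X,d_q)]$; by the SUTVA and ignorability assumptions of Section two, $a(X,d_q)=\mathbb{E}[Y\mid D=d_q,X]=\mathbb{E}[Y(d_q)\mid X]$, so this limit is $\mathbb{E}[Y(d_q)]=\mu(d_q)$. For the first term I would use $\hat{\alpha}^j\to\alpha_*^j$, $\hat{\beta}^k\to\beta_*^k$, $\hat{\theta}^j\to\theta_*^j$, $\hat{\eta}^k\to\eta_*^k$ and the assumed finite limit $\hat{\rho}^a\to\rho_*^a$, so that $\hat{g}_i(\hat{\alpha},\hat{\beta})\to g(\alpha_*,\beta_*;X_i)$ and each scaled sum converges to its expectation, producing the ratio in (\ref{eqAmod}).

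The crux is that this ratio equals zero. Its denominator, $\mathbb{E}[I_{d_q}(D)/\{1+(\rho_*^a)^T g(\alpha_*,\beta_*;X)\}]$, is a strictly positive constant because the common-support condition keeps $1+(\rho_*^a)^T g(\alpha_*,\beta_*;X)$ bounded away from zero on the relevant region. For the numerator, $g(\alpha_*,\beta_*;X)$ depends on $X$ only (with $d_q$ fixed), so conditioning on $X$ lets me pull out $\{1+(\rho_*^a)^T g(\alpha_*,\beta_*;X)\}^{-1}$, and it remains to see that $\mathbb{E}[I_{d_q}(D)\{Y-a(X,d_q)\}\mid X]=0$; this holds because $a(X,d_q)=\mathbb{E}[Y\mid D=d_q,X]$ gives $\mathbb{E}[I_{d_q}(D)Y\mid X]=\mathbb{E}[I_{d_q}(D)\,\mathbb{E}(Y\mid D,X)\mid X]=\mathbb{E}[I_{d_q}(D)\mid X]\,a(X,d_q)$. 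Hence (\ref{eqAmod}) collapses to $\mu(d_q)$, so $\hat{\mu}_{MR}(d_q)\overset{p}{\to}\mu(d_q)$; combined with Lemma \ref{LmodP} this also yields Theorem \ref{ThmMR}.

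I expect the main obstacle to be the rigorous justification of the term-by-term limits: interchanging the $n^{-1}$-sums with expectations and controlling the random denominators $1+\hat{\rho}^T\hat{g}_i$ uniformly over shrinking neighbourhoods of $(\alpha_*,\beta_*,\rho_*^a)$. This is where the unnamed ``suitable regularity conditions'' do the work -- a uniform law of large numbers over the function classes $\{x\mapsto\pi^j(d_q\mid x;\alpha^j)\}$ and $\{x\mapsto a^k(x,d_q;\beta^k)\}$, compact parameter spaces, moment conditions on $Y$, and the positivity/common-support assumption to keep the denominators away from zero with probability tending to one. Two smaller points I would flag rather than belabour: existence of a root $\hat{\rho}^a$ of (\ref{eqD22}) with a finite probability limit is already a hypothesis, but one should note (as in the missing-data literature) that such a root must be selected among possibly several, and the conclusion is insensitive to the choice because the conditional-mean-zero identity above holds for every value of $\rho_*$; and the calibration identity $\sum_{i\in\mathcal{M}_q}\hat{w}_{i,d_q}\hat{g}_i=0$ should be verified to follow genuinely from (\ref{eqD22}).
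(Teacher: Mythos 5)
Your proposal is correct and follows essentially the same route as the paper's proof: the same decomposition of $\hat{\mu}_{MR}(d_q)$ via the calibration identity implied by (\ref{eqD22}), the same passage to the limit (\ref{eqAmod}), and the same argument that the augmentation term vanishes. If anything, your final step is the more careful one — the paper asserts the pointwise identity $I_{d_q}(D)\{Y-a(X,d_q)\}=0$, which is false as written, whereas your conditioning on $X$ yields the correct mean-zero statement.
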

Together, Lemmas \ref{LmodP} and \ref{LmodA} prove the result of Theorem \ref{ThmMR}, and thus we see that our constructed MR estimator does indeed have the MR property.

\subsection{Numerical Implementation}
An important considerations for numerical implementation is that in general, (\ref{eqD22}) has multiple roots, and not all of them will lead to MR estimators. The original approach of \citet{Han/Wang:2013} can lead to problems of convergence to an incorrect root, and in this case the estimator will not have the desired MR property. \citet{Han:2014a} and \citet{Han:2014b} solve this problem by utilising an alternative approach. Here, the root is found by solving a convex minimization problem which almost always converges to the correct root. When implementing our estimator in the following simulation study, we use a similar algorithm.
\par 
For each treatment level $d_q$ our estimator is constructed as follows. With $\hat{g}_i(\hat{\alpha},\hat{\beta})$ defined as in (\ref{eqD20}), we define $F_n(\rho)$ by 
\begin{align*}
F_n(\rho) = -\frac{1}{n}\sum_{i\in\mathcal{M}_q}log(1+\hat{\rho}^T\hat{g}_i(\hat{\alpha},\hat{\beta}))
\end{align*}
and let
\begin{align*}
\mathcal{D}_n = \{\rho : 1+\hat{\rho}^T\hat{g}_i(\hat{\alpha},\hat{\beta})>0, i\in\mathcal{M}_q \}
\end{align*}
From here we follow the convex minimization algorithm of \citet{Han:2014a} and \citet{Han:2014b}, where in the construction of $\Delta_1$ and $\Delta_2$ the sums are taken over $i\in\mathcal{M}_q$. We now show that the root found using the method of \citet{Han:2014b} satisfies the requirements for mutiple robustness in the multivalued setting.

\begin{theorem}\label{ThmMR_New}
The MR estimator has the multiply-robust property for estimating the APOs $\mu(d_q)$ in the multivalued treatment case, when the root of (\ref{eqD22}) is found using the convex optimization procedure.
\end{theorem}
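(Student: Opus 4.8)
The plan is to show that, in each of the two scenarios at issue --- $\mathcal{P}$ containing a correctly specified GPS model, or $\mathcal{A}$ containing a correctly specified OR model --- the root $\hat{\rho}$ returned by the convex minimization procedure is one of the roots already handled by Theorem \ref{ThmMR}, so that the conclusion follows from Lemmas \ref{LmodP} and \ref{LmodA}. The procedure returns $\hat{\rho}=\arg\min_{\rho\in\mathcal{D}_n}F_n(\rho)$, and since
\begin{align*}
\nabla_{\rho}F_n(\rho)=-\frac{1}{n}\sum_{i\in\mathcal{M}_q}\frac{\hat{g}_i(\hat{\alpha},\hat{\beta})}{1+\rho^T\hat{g}_i(\hat{\alpha},\hat{\beta})},
\end{align*}
any minimizer lying in the interior of $\mathcal{D}_n$ is precisely a solution of (\ref{eqD22}). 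Moreover $F_n$ is strictly convex on the convex set $\mathcal{D}_n$, its Hessian being $\frac{1}{n}\sum_{i\in\mathcal{M}_q}\hat{g}_i(\hat{\alpha},\hat{\beta})\hat{g}_i(\hat{\alpha},\hat{\beta})^T/(1+\rho^T\hat{g}_i(\hat{\alpha},\hat{\beta}))^2$, which is positive definite whenever the vectors $\{\hat{g}_i(\hat{\alpha},\hat{\beta}):i\in\mathcal{M}_q\}$ span $\R^{J+K}$ --- a non-degeneracy condition holding with probability tending to one. Hence $F_n$ has at most one minimizer and (\ref{eqD22}) at most one root inside $\mathcal{D}_n$; it remains to identify it in each case.

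First suppose $\mathcal{P}$ contains a correctly specified GPS model $\pi^1(d_q|x;\alpha^1)$. The proof of Lemma \ref{LmodP} already exhibits a root $\hat{\rho}^{\pi}$ of (\ref{eqD22}), with $\hat{\rho}^{\pi}_1=(\hat{\lambda}_1+1)/\hat{\theta}^1$ and $\hat{\rho}^{\pi}_l=\hat{\lambda}_l/\hat{\theta}^1$ for $l\ge 2$. Using the first entry of $\hat{g}_i(\hat{\alpha},\hat{\beta})$, a short computation gives
\begin{align*}
1+(\hat{\rho}^{\pi})^T\hat{g}_i(\hat{\alpha},\hat{\beta})=\frac{\hat{\pi}^1(d_q|X_i;\hat{\alpha}^1)}{\hat{\theta}^1}\left\{1+\frac{\hat{\lambda}^T\hat{g}_i(\hat{\alpha},\hat{\beta})}{\hat{\pi}^1(d_q|X_i;\hat{\alpha}^1)}\right\}=\frac{\hat{\pi}^1(d_q|X_i;\hat{\alpha}^1)}{\hat{\theta}^1\,m_q\,\hat{p}_i},
\end{align*}
which is strictly positive for every $i\in\mathcal{M}_q$ because the empirical probabilities $\hat{p}_i$, the fitted score $\hat{\pi}^1(d_q|X_i;\hat{\alpha}^1)$ and $\hat{\theta}^1$ are all positive. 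Thus $\hat{\rho}^{\pi}\in\mathcal{D}_n$, and being a stationary point of the strictly convex $F_n$ it is its unique minimizer, so the convex procedure returns exactly $\hat{\rho}=\hat{\rho}^{\pi}$; Lemma \ref{LmodP} then gives $\hat{\mu}_{MR}(d_q)\overset{p}{\to}\mu(d_q)$.

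Now suppose instead $\mathcal{A}$ contains a correctly specified OR model $a^1(x,d;\beta^1)$. No closed-form root is available here, so I would instead verify the hypothesis of Lemma \ref{LmodA} --- existence of a root with finite probability limit --- directly for $\hat{\rho}^a:=\hat{\rho}$. Using $\hat{\alpha}^j\overset{p}{\to}\alpha_*^j$, $\hat{\beta}^k\overset{p}{\to}\beta_*^k$, $\hat{\theta}^j\overset{p}{\to}\theta_*^j$ and $\hat{\eta}^k\overset{p}{\to}\eta_*^k$, the random convex functions $F_n$ converge pointwise to the convex limit $F(\rho)=-\mathbb{E}[I_{d_q}(D)\log\{1+\rho^Tg(\alpha_*,\beta_*;X)\}]$, which is strictly convex under the population analogue of the non-degeneracy condition --- that $g(\alpha_*,\beta_*;X)$ is not supported on a proper linear subspace on the positive-probability event $\{D=d_q\}$ --- and therefore has a unique minimizer $\rho_*^a$. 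Since pointwise convergence of convex functions upgrades to uniform convergence on compact sets, the minimizers converge, $\hat{\rho}\overset{p}{\to}\rho_*^a$; in particular $\hat{\rho}$ solves (\ref{eqD22}) and has a finite probability limit, so Lemma \ref{LmodA} gives $\hat{\mu}_{MR}(d_q)\overset{p}{\to}\mu(d_q)$. Together with the previous case, this proves Theorem \ref{ThmMR_New}.

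The step I expect to be the main obstacle is the argmin-convergence claim just invoked: showing that, in the OR-correct scenario, the infimum of $F_n$ over $\mathcal{D}_n$ is attained in the interior with probability tending to one --- equivalently that $0$ eventually lies in the interior of the convex hull of $\{\hat{g}_i(\hat{\alpha},\hat{\beta}):i\in\mathcal{M}_q\}$, or that $F_n$ is coercive on $\mathcal{D}_n$ --- so that the point returned by the algorithm genuinely solves (\ref{eqD22}) rather than sitting on the boundary, together with the tightness and uniform-convergence bookkeeping needed to pin down $\rho_*^a$. This is precisely the convergence analysis carried out by \citet{Han:2014a} and \citet{Han:2014b}; the essential observation is that conditioning on $D=d_q$ --- so that $\mathcal{M}_q$ and the conditional law of $(Y,X)\mid D=d_q$ play the roles of the observed subsample and the fully observed law in the missing-data problem --- reduces our situation to exactly the structure they treat, whence their argument transfers with only notational changes.
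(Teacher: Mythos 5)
Your proposal is correct and follows essentially the same route as the paper: both cases are reduced to Lemmas \ref{LmodP} and \ref{LmodA} by showing that the convex-minimization root coincides with $\hat{\rho}^{\pi}$ when $\mathcal{P}$ is correct (via positivity of $1+(\hat{\rho}^{\pi})^T\hat{g}_i$ and strict convexity of $F_n$) and has a finite probability limit when $\mathcal{A}$ is correct, with the convex-hull/boundedness technicalities deferred to \citet{Han:2014a} and \citet{Han:2014b}. Your exact identity $1+(\hat{\rho}^{\pi})^T\hat{g}_i=\hat{\pi}^1(d_q|X_i;\hat{\alpha}^1)/(\hat{\theta}^1 m_q\hat{p}_i)$ and the explicit spanning condition for strict convexity are slightly sharper than the paper's $o_p(1)$ statement, but the argument is the same in substance.
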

As in \citet{Han:2014a}, the additional constraint from the original method is that the weights must be non-negative. As before, we index the subjects with $D_i=d_q$ by $1,...,m_q$ without loss of generality. The estimator for $\hat{\mu}_{MR}(d_q)$ is then $\sum_{i =1}^{m_q}\hat{w}_{i,d_q}Y_i$. For ease of notation, we use $w_i$ in place of $\hat{w}_{i,d_q}$ in what follows. The weights $w_i$ are found by the optimization problem:
\begin{align}\label{eqLM}
\underset{w_1,...,w_{m_q}}{\max}\prod_{i=1}^{m_q}w_i\ \ \ {\rm subject\ to}\ \ \ &w_i\geq0,\ (i=1,...,m_q),\ \ \ \sum_{i=1}^{m_q} w_i=1 \nonumber \\
&\sum_{i=1}^{m_q} w_i\{\pi^j(d_q|X_i;\alpha^j)-\hat{\theta}^j\}=0 \nonumber \\
&\sum_{i=1}^{m_q} w_i\{a^k(X_i,d_q;\beta^k)-\hat{\eta}^k\}=0
\end{align}
Using the method of Lagrange multipliers gives the same weights as in (\ref{eqD21}), where $\hat{\rho}$ still solves (\ref{eqD22}). The additional constraint from the non-negativity of the $w_i$ gives the following constraint on $\hat{\rho}$:
\begin{align}
1+\hat{\rho}^T\hat{g}_i(\hat{\alpha},\hat{\beta})>0, i=1,...,m_q.
\end{align}
Having defined $F_n(\rho)$ and $\mathcal{D}_n$ as above, we have from Lemma 4 that $\textbf{0}$ is inside the convex hull of $ \{\hat{g}_i(\hat{\alpha},\hat{\beta}), i=1,...,m_q \}$ when $m_q$ is sufficiently large. Thus $\mathcal{D}_n$ is a bounded set, with the proof following as in  \citet{Han:2014a}. It is also easy to verify that it is an open convex set, and so is an open bounded convex polytope with its boundary consisting of some or all of the $m_q$ hyperplanes $\{\rho : 1+\hat{\rho}^T\hat{g}_i(\hat{\alpha},\hat{\beta})=0 \}$ $(i=1,...,m_q)$. Furthermore, $F_n(\rho)$ is a strictly convex function on $\mathcal{D}_n$, due to the fact that $-\log(.)$ is a strictly convex function. 
\par
We define
\begin{align*}
\hat{\rho}^{min}=arg\underset{\rho \in \mathcal{D}_n}{\min}F_n(\rho).
\end{align*}
As in the original proof, the properties of  $F_n(\rho)$ and $\mathcal{D}_n$ imply that $\hat{\rho}^{min}$ exists and is a root of (\ref{eqD22}). Therefore, the new estimator is formed by taking $\hat{\rho}=\hat{\rho}^{min}$ in the calculation of the weights.
\par
As $n$ increases,  $F_n(\rho)$ converges to  $F(\rho) = \mathbb{E}[I_{d_q}(D)\{1+{\rho}^T{g}({\alpha_*},{\beta_*})\}]$ in probability pointwise on $\mathcal{D}$ under suitable regularity 
conditions. Here, ${g}({\alpha_*},{\beta_*})$ is as in the proof of Theorem 3, and 
\begin{align*}
\mathcal{D} = \{\rho : 1+{\rho}^T{g}_x({\alpha_*},{\beta_*})>0,\textrm{for any x that is a realization of }  X\mid D=d_q\}
\end{align*}
where ${g}_x({\alpha_*},{\beta_*})$ is ${g}({\alpha_*},{\beta_*})$ with $X=x$. Under some reasonable assumptions of the boundedness of $a^k(x,d_q;\beta_*^k)$ and $\hat{\pi}^j(d_q|x;\alpha_*^j)$, and the finiteness of $\eta_*^k$ and $\theta_*^j$, $\mathcal{D}$ is an open convex set. Furthermore, Lemma 4 implies that $\mathbb{E}[\frac{{g}({\alpha_*},{\beta_*})}{\pi(d_q|X)}|D=d_q]=0$. Thus $\underset{x \sim X \mid D=d_q}{\sup}\{-\tilde{\rho}^T{g}_x({\alpha_*},{\beta_*})\}>0$ for any $\tilde{\rho}$ with $||\tilde{\rho}||=1$, where $x \sim X \mid D=d_q$ means that $x$ is a possible realization of $X \mid D=d_q$. Since ${g}({\alpha_*},{\beta_*})$ is also bounded by assumption, we see that $\mathcal{D}$ is a bounded set, with the proof following as in \citet{Han:2014a}. $F(\rho)$ is a strictly convex function because $-\log(.)$ is, and therefore when it has a minimum in $\mathcal{D}$, the minimizer is unique and the global minimizer.
\par
From this we can see why $\hat{\rho}^{min}$ is the correct root for multiple-robustness. When $\pi(d|x)$ is correctly modelled, by $\pi^1(d|x;\alpha^1)$ without loss of generality, we have that
\begin{align*}
1+(\hat{\rho}^{\pi})^T\hat{g}_i(\hat{\alpha},\hat{\beta}) = \frac{\pi^1(d_q|X_i;\hat{\alpha}^1)}{\hat{\theta}^1}+o_p(1) \ \ \ \ \ (i=1,...,m_q).
\end{align*}
Here, $\hat{\rho}^{\pi}$ is as defined in Lemma 5, and we use the fact that $\hat{\lambda}=O_p(n^{-1/2})$ as shown in the proof of the same lemma, where $\hat{\lambda}$ solves (\ref{eqlam}). Therefore  $\hat{\rho}^{\pi} \in \mathcal{D}_n$ with probability approaching 1. Since  $\hat{\rho}^{\pi}$ solves (\ref{eqD22}), $\frac{\partial F_n(\hat{\rho}^{\pi})}{\partial \rho}=0$. $F_n(\rho)$ is strictly convex, and thus $\hat{\rho}^{\pi}$ is the unique minimizer on $\mathcal{D}_n$ and so $\hat{\rho}^{min}=\hat{\rho}^{\pi}$ with probability approaching 1. The proof of Lemma 5 now follows through using $\hat{\rho}^{min}$ in place of $\hat{\rho}^{\pi}$.
\par
If instead $a(x,d)$ is correctly specified, $\hat{\rho}^{min}$ is still uniquely identified. If we denote the minimizer of $F(\rho)$ in $\mathcal{D}$ by $\rho_*$, then $\rho_*$ is finite due to the boundedness of $\mathcal{D}$. Furthermore, $\hat{\rho}^{min} \overset{p}{\to} \rho_*$, with the proof following through as in \citet{Han:2014a}. The proof of Lemma 6 only requires that the root of (\ref{eqD22}) used has a finite probability limit. Thus the proof follows through using $\hat{\rho}^{min}$ in place of $\hat{\rho}^{a}$. Finally, we see that calculating the weights using $\rho=\hat{\rho}^{min}$ leads to a multiply-robust estimator. Theorem 7 holds because the proofs of lemmas 5 and 6 follow with the new root $\hat{\rho}^{min}$ found using the convex optimization procedure.

\section{Simulations}
In this section we present simulations to demonstrate the robustness of our proposed estimator to problems of confounding and functional form misspecification, which are routinely encountered in causal analyses. 

A multivalued treatment $D$ with 4 levels is assigned as a function of covariates $X$. The outcome $Y$ is quadratic in $D$ with confounding from $X$. This simulation study is adapted from the one performed by \citet{Han/Wang:2013}.
\begin{align*}
X \sim& U[-2.5,2.5]\\
D|X\sim& Bin(3,\pi(X))\\
Y|X,D\sim& \mathcal{N}(a(X,D),\sigma_{Y}^2=2)
\end{align*}
where $logit\{\pi(x)\}=-0.5+0.1x-0.2x^2$ and $a(x,d)=1+2d-0.35d^2+2x+3x^2$

We propose two OR models, one correctly specified ($a^1$) and one incorrectly specified ($a^2$):
\begin{align*}
a^1(x,d;\beta^1)&=\beta^1_0+\beta^1_1d+\beta^1_2d^2+\beta^1_3x+\beta^1_4x^2\\
a^2(x,d;\beta^2)&=\beta^2_0+\beta^2_1d+\beta^2_2x
\end{align*}
We also propose a correctly specified ($\pi^1$) and incorrectly specified ($\pi^2$) model for $\pi(x)$
\begin{align*}
\pi^1(x;\alpha^1)&=(\exp\{\alpha^1_0+\alpha^1_1x+\alpha^1_2x^2\}+1)^{-1}\\
\pi^2(x;\alpha^2)&=1-\exp(-\exp\{\alpha^2_0+\alpha^2_1x+\alpha^2_2e^x\})
\end{align*}
The parameter estimates $\hat{\beta}^k$ are found as the regression coefficients of a generalized linear model for $a(x,d)$, and we can find the estimates $\hat{\alpha}^j$ as the maximizer of the binomial likelihood
\begin{align*}
\prod_{i=1}^n\{\pi^j(X_i;\alpha^j)\}^{D_i}\{1-\pi^j(X_i;\alpha^j)\}^{3-D_i}
\end{align*}
The PS model is then
\begin{align*}
\hat{\pi}(d_q|X)={{N}\choose{d_q}}\hat{\pi}(X)^j(1-\hat{\pi}(X))^{N-d_q}
\end{align*}
where $N=3$, $d_q=0,...,3$ and $\hat{\pi}(X)$ is estimated by one of our two postulated models for $\pi(x)$.

We test the following models:
\begin{enumerate}
\item Four DR models - $\hat{\mu}_{DR\_1010}(d_q)$, $\hat{\mu}_{DRP\_1001}(d_q)$, $\hat{\mu}_{DR\_0110}(d_q)$ and $\hat{\mu}_{DR\_0101}(d_q)$. Here we combine the OR models with the PS models using the inverse weighting approach. With two OR models and two PS models there are four possible combinations. The notation we use is as follows. The first two digits denote which PS model we are using, and the second two digits denote which OR model. For example $\hat{\mu}_{DR\_1001}(d_q)$ denotes the estimator using PS model $\pi^1$ and OR model $a^2$.
\item Five MR models - $\hat{\mu}_{MR\_1101}(d_q)$, $\hat{\mu}_{MR\_1110}(d_q)$, $\hat{\mu}_{MR\_1011}(d_q)$,  $\hat{\mu}_{MR\_0111}(d_q)$ and
\newline $\hat{\mu}_{MR\_1111}(d_q)$. Here we combine the OR models with the PS models using the multiply-robust approach introduced in the previous chapter. For example 
\newline $\hat{\mu}_{MR\_1101}(d_q)$ denotes the estimator using PS models $\pi^1$ and $\pi^2$ and OR model $a^2$.
\end{enumerate}

In each case, point estimates of the APOs for each level of the treatment are presented, as well as empirical variance and bias estimates. In each case these are obtained by calculating the estimate 1000 times on simulated data sets of size 10000.

Table \ref{simres} shows the results for the simulations. We first note that for the DR models, when either the PS or the OR model is correctly specified, the estimators have small biases. However, when neither estimator is correctly specified the bias is several orders of magnitude larger. We also note that the variances for the DR models are small.

We see that the MR estimator is working well compared to the DR estimators, with similarly small biases and with variances similar in size to the DR estimates for either a correctly specified PS or OR model. The one case where the variance appears unusually high is in the estimate of the first treatment level with the $MR\_0111$ estimator which does not include the correctly specified PS model. However, this is still of a comparable size to the variances of the other estimator, and the bias is still very small.

We conclude that the MR estimator is performing as expected, giving an accurate estimate of the APOs when either the family of PS or OR models contains a correctly specified model. Importantly, because we specify two PS and two OR models, we ensure against the case in DR estimation when both models are specified incorrectly. Thus our results confirm that, when there is significant uncertainty about the correct specification of the OR and PS model, it it of value to use a MR estimator. For example, in the case examined in the simulation where we have two OR and two PS models but do not know which is correctly specified, there is a chance of choosing a DR estimator that performs poorly ($DR\_0101$ in this case). However, in the MR setting, we can use an estimator that includes all four models, and the resulting estimator performs well.

\begin{table}[t!]\label{simres}
\caption{Mean Average Potential Outcomes for DR and MR Estimators}
\centering
\begin{tabular}{ |p{2cm}p{2cm}|p{1.5cm}p{1.5cm}p{1.5cm}p{1.5cm}|  }
 \hline
  \multicolumn{2}{|c|}{} &
\multicolumn{4}{|c|}{Treatment level} \\
  \multicolumn{2}{|c|}{} &
$0$ & $1$   &$2$&   $3$\\
 \hline
 \multicolumn{2}{|c|}{Truth}  & 7.253  &8.903 & 9.853&10.103\\[0.2cm]
\hline
\multirow{3}{4em}{DR\_1010} & Av Est & 7.255&8.902&9.853&10.103  \\ 
& Emp Var& 0.008&0.006&0.005&0.008 \\ 
&Bias&0.002&-0.001&0.000&0.000 \\ [0.1cm]
\multirow{3}{4em}{DR\_1001} & Av Est & 7.238&8.903&9.850&10.109 \\ 
& Emp Var& 0.073&0.015&0.012&0.071  \\ 
&Bias& -0.015&0.001&-0.002&0.006 \\ [0.1cm]
\multirow{3}{4em}{DR\_0110} & Av Est & 7.253&8.901&9.853&10.103\\ 
& Emp Var&0.008&0.006&0.005&0.008\\ 
&Bias& 0.000&-0.001&0.000&0.000\\
\multirow{3}{4em}{DR\_0101} & Av Est & 7.010&8.784&9.969&10.479\\ 
& Emp Var&0.071&0.014&0.011&0.061\\ 
&Bias& -0.243&-0.118&0.117&0.377\\
\hline
\multirow{3}{4em}{MR\_1101} & Av Est & 7.251  &8.899  &9.852 &10.098\\ 
& Emp Var& 0.009&0.006&0.005&0.009\\ 
&Bias&-0.002&-0.004&-0.001&-0.005 \\ [0.1cm]
\multirow{3}{4em}{MR\_1110} & Av Est & 7.247&8.899&9.848&10.099\\ 
& Emp Var&0.008&0.005&0.005&0.009 \\ 
&Bias& -0.006&-0.004&-0.005&-0.006\\ [0.1cm]
\multirow{3}{4em}{MR\_1011} & Av Est &7.250&8.905&9.854&10.105\\ 
& Emp Var&0.008&0.006&0.005&0.009 \\ 
&Bias&-0.003&0.002&0.002&0.002\\
\multirow{3}{4em}{MR\_0111} & Av Est & 7.249&8.899&9.851&10.098\\ 
& Emp Var&0.008&0.005&0.005&0.008\\ 
&Bias&-0.004&-0.003&-0.002&-0.005\\
\multirow{3}{4em}{MR\_1111} & Av Est &7.248&8.899&9.850&10.096\\ 
& Emp Var&0.008&0.005&0.005&0.009 \\ 
&Bias&-0.005&-0.004&-0.003&-0.007\\
\hline
\end{tabular}
\end{table}

\section{Conclusions}
In this paper we have built on the work of \citet{Han/Wang:2013}, \citet{Han:2014a} and \citet{Han:2014b} to formulate a multiply robust estimator for causal dose-response estimation. The model is multiply robust in the sense that consistent estimates of average potential outcomes can be obtained under misspecification of all but one of our multiple outcome regression or generalised propensity score models. We have shown that the MR approach can provide a good approximation to linear or nonlinear dose-response functions and is robust to problems of confounding or functional form misspecification. 

\section*{Acknowledgements}
We are grateful to Peisong Han for helpful comments on an earlier draft of this paper.

\section*{Appendix}
\subsection*{Proof of doubly-robust properties for multivalued treatments} 
\begin{proof} 
By the WLLN, we have that
\begin{align*}
\hat{\mu}_{DR}(d_q)\overset{p}{\to} \mathbb{E}\bigg[\frac{I_{d_q}(D)Y}{\pi(d_q|X;\alpha)}-\frac{I_{d_q}(D)-\pi(d_q|X;\alpha)}{\pi(d_q|X;\alpha)}\Psi^{-1}\{m(X,d_q;\beta)\}\bigg]
\end{align*}
Where $\pi(d_q|X;\alpha)$ is the postulated PS model for $\pi(d_q|X)$,  $\Psi^{-1}\{m(X,d_q;\beta)\}$ is the postulated OR model and $\alpha$ and $\beta$ are the true parameters of the PS and OR models respectively. 

Using the SUTVA this quantity is equal to
\begin{align*}
 \mathbb{E}\bigg[\frac{I_{d_q}(D)Y(d_q)}{\pi(d_q|X;\alpha)}-\frac{I_{d_q}(D)-\pi(d_q|X;\alpha)}{\pi(d_q|X;\alpha)}\Psi^{-1}\{m(X,d_q;\beta)\}.\bigg]
\end{align*}
Rearranging, gives
\begin{align} \label{eqA1}
\mathbb{E}\bigg[Y(d_q)\bigg]+ \mathbb{E}\bigg[\frac{I_{d_q}(D)-\pi(d_q|X;\alpha)}{\pi(d_q|X;\alpha)}\bigg\{Y(d_q)-\Psi^{-1}\{m(X,d_q;\beta)\}\bigg\}\bigg]
\end{align}
The first term is our estimand, so all that remains is to show that the second term is 0 if either the PS or OR model is correctly specified. Let us first consider the case where the OR model is correctly specified, so that
\begin{align*}
\Psi^{-1}\{m(X,d_q;\beta)\}=\mathbb{E}(Y|D=d_q,X)
\end{align*}
Then, using the above equality and the law of iterated expectation, the second term of (\ref{eqA1}) becomes
 \begin{align*}
&\mathbb{E}\bigg(\mathbb{E}\bigg[\frac{I_{d_q}(D)-\pi(d_q|X;\alpha)}{\pi(d_q|X;\alpha)}\bigg\{Y(d_q)-\mathbb{E}(Y|D=d_q,X)\bigg\}\bigg|I_{d_q}(D),X\bigg]\bigg)\\
=&\mathbb{E}\bigg(\frac{I_{d_q}(D)-\pi(d_q|X;\alpha)}{\pi(d_q|X;\alpha)}\mathbb{E}\bigg[\bigg\{Y(d_q)-\mathbb{E}(Y|D=d_q,X)\bigg\}\bigg|I_{d_q}(D),X\bigg]\bigg)\\
=&\mathbb{E}\bigg(\frac{I_{d_q}(D)-\pi(d_q|X;\alpha)}{\pi(d_q|X;\alpha)}\bigg\{\mathbb{E}[Y(d_q)|I_{d_q}(D),X]-\mathbb{E}(Y|D=d_q,X)\bigg\}\bigg)\\
=&\mathbb{E}\bigg(\frac{I_{d_q}(D)-\pi(d_q|X;\alpha)}{\pi(d_q|X;\alpha)}\bigg\{\mathbb{E}[Y(d_q)|X]-\mathbb{E}(Y(d_q)|X)\bigg\}\bigg)=0
\end{align*}
where the last line follows from the weak conditional independence condition implied by the no unmeasured confounders assumption, i.e.
\begin{align*}
\mathbb{E}(Y|D=d_q,X)=\mathbb{E}(Y(d_q)|D=d_q,X)=\mathbb{E}(Y(d_q)|X)=\mathbb{E}(Y(d_q)|I_{d_q}(D),X)
\end{align*}
\bigbreak
\noindent Conversely, if the PS model is correctly specified then 
\begin{align*}
\pi(d_q|X;\alpha)=\pi(d_q|X)=\mathbb{P}(D=d_q|X)=\mathbb{E}(I_{d_q}(D)|X)
\end{align*}
Thus, using the above equality and the law of iterated expectation, the second term of (\ref{eqA1}) becomes
\begin{align*}
&\mathbb{E}\bigg(\mathbb{E}\bigg[\frac{I_{d_q}(D)-\pi(d_q|X)}{\pi(d_q|X)}\bigg\{Y(d_q)-\Psi^{-1}\{m(X,d_q;\beta)\}\bigg\}\bigg|Y(d_q),X\bigg]\bigg)\\
=&\mathbb{E}\bigg(\bigg\{Y(d_q)-\Psi^{-1}\{m(X,d_q;\beta)\}\bigg\}\mathbb{E}\bigg[\frac{I_{d_q}(D)-\pi(d_q|X)}{\pi(d_q|X)}\bigg|Y(d_q),X\bigg]\bigg)\\
=&\mathbb{E}\bigg(\bigg\{Y(d_q)-\Psi^{-1}\{m(X,d_q;\beta)\}\bigg\}\frac{\mathbb{E}[I_{d_q}(D)|Y(d_q),X]-\pi(d_q|X)}{\pi(d_q|X)}\bigg)\\
=&\mathbb{E}\bigg(\bigg\{Y(d_q)-\Psi^{-1}\{m(X,d_q;\beta)\}\bigg\}\frac{\mathbb{E}[I_{d_q}(D)|X]-\pi(d_q|X)}{\pi(d_q|X)}\bigg)\\
=&\mathbb{E}\bigg(\bigg\{Y(d_q)-\Psi^{-1}\{m(X,d_q;\beta)\}\bigg\}\frac{\pi(d_q|X)-\pi(d_q|X)}{\pi(d_q|X)}\bigg)=0
\end{align*}
where the second last line follows from the no unmeasured confounders assumption as before.
\bigbreak
\noindent Thus we see that $\hat{\mu}_{DR}(d_q)$ consistently estimates the APO $\mu(d_q)$ when either the OR model or PS model is correctly specified, in the case of multivalued treatment.
\end{proof}

\bibliographystyle{chicago}
\bibliography{MR}

\begin{thebibliography}{}

\bibitem[\protect\citeauthoryear{Bang and Robins}{Bang and
  Robins}{2005}]{Bang/Robins:2005}
Bang, H. and J.~M. Robins (2005).
\newblock Doubly robust estimation in missing data and causal inference models.
\newblock {\em Biometrics\/}~{\em 61}, 962--972.

\bibitem[\protect\citeauthoryear{Chan}{Chan}{2013}]{Chan:2013}
Chan, K. C.~G. (2013).
\newblock A simple multiply robust estimator for missing response problem.
\newblock {\em Stat\/}~{\em 2\/}(1), 143--149.

\bibitem[\protect\citeauthoryear{Chan and Yam}{Chan and
  Yam}{2014}]{Chan/Yam:2014}
Chan, K. C.~G. and S.~C.~P. Yam (2014).
\newblock Oracle, multiple robust and multipurpose calibration in a missing
  response problem.
\newblock {\em Statistical Science\/}~{\em 29\/}(3), 380--396.

\bibitem[\protect\citeauthoryear{Flores and Mitnik}{Flores and
  Mitnik}{2009}]{Flores/Mitnik:2009}
Flores, C.~A. and O.~A. Mitnik (2009).
\newblock Evaluating nonexperimental estimators for multiple treatments:
  Evidence from experimental data.
\newblock IZA Discussion Papers 4451, Institute for the Study of Labor (IZA).

\bibitem[\protect\citeauthoryear{Graham, McCoy, and Stephens}{Graham
  et~al.}{2016}]{Graham/et/al:2016}
Graham, D.~J., E.~J. McCoy, and D.~A. Stephens (2016).
\newblock Approximate bayesian inference for doubly robust estimation.
\newblock {\em Bayesian Anal.\/}~{\em 11\/}(1), 47--69.

\bibitem[\protect\citeauthoryear{Han}{Han}{2014a}]{Han:2014a}
Han, P. (2014a).
\newblock A further study of the multiply robust estimator in missing data
  analysis.
\newblock {\em Journal of Statistical Planning and Inference\/}~{\em 148}, 101
  -- 110.

\bibitem[\protect\citeauthoryear{Han}{Han}{2014b}]{Han:2014b}
Han, P. (2014b).
\newblock Multiply robust estimation in regression analysis with missing data.
\newblock {\em Journal of the American Statistical Association\/}~{\em
  109\/}(507), 1159--1173.

\bibitem[\protect\citeauthoryear{Han and Wang}{Han and
  Wang}{2013}]{Han/Wang:2013}
Han, P. and L.~Wang (2013).
\newblock Estimation with missing data: beyond double robustness.
\newblock {\em Biometrika\/}~{\em 100\/}(2), 417--430.

\bibitem[\protect\citeauthoryear{Hirano and Imbens}{Hirano and
  Imbens}{2004}]{Hirano/Imbens:2004}
Hirano, K. and G.~W. Imbens (2004).
\newblock The propensity score with continuous treatments.
\newblock In A.~Gelman and X.~L. Meng (Eds.), {\em Applied {B}ayesian modeling
  and causal inference from incomplete data perspectives}, pp.\  73--84. New
  York: Wiley.

\bibitem[\protect\citeauthoryear{Horvitz and Thompson}{Horvitz and
  Thompson}{1952}]{Horovitz/Thompson:1952}
Horvitz, D.~G. and D.~J. Thompson (1952).
\newblock A generalization of sampling without replacement from a finite
  universe.
\newblock {\em Journal of the American Statistical Association\/}~{\em 47},
  663--685.

\bibitem[\protect\citeauthoryear{Imbens}{Imbens}{2000}]{Imbens:2000}
Imbens, G.~W. (2000).
\newblock The role of the propensity score in estimating dose-response
  functions.
\newblock {\em Biometrika\/}~{\em 87\/}(3), 706--710.

\bibitem[\protect\citeauthoryear{Imbens and Rubin}{Imbens and
  Rubin}{2015}]{Imbens/Rubin:2015}
Imbens, G.~W. and D.~B. Rubin (2015).
\newblock {\em Causal Inference for Statistics, Social, and Biomedical
  Sciences: An Introduction}.
\newblock New York, NY, USA: Cambridge University Press.

\bibitem[\protect\citeauthoryear{Kang and Schafer}{Kang and
  Schafer}{2007}]{Kang/Schafer:2007}
Kang, J. D.~Y. and J.~L. Schafer (2007).
\newblock Demystifying double robustness: A comparison of alternative
  strategies for estimating a population mean from incomplete data.
\newblock {\em Statistical Science\/}~{\em 22\/}(4), 523--539.

\bibitem[\protect\citeauthoryear{Lunceford and Davidian}{Lunceford and
  Davidian}{2004}]{Lunceford/Davidian:2004}
Lunceford, J.~K. and M.~Davidian (2004).
\newblock Stratification and weighting via the propensity score in estimation
  of causal treatment effects: a comparative study.
\newblock {\em Statistics in Medicine\/}~{\em 23}, 2937--2960.

\bibitem[\protect\citeauthoryear{Qin and Zhang}{Qin and
  Zhang}{2007}]{Qin/Zhang:2007}
Qin, J. and B.~Zhang (2007).
\newblock Empirical-likelihood-based inference in missing response problems and
  its application in observational studies.
\newblock {\em Journal of the Royal Statistical Society: Series B (Statistical
  Methodology)\/}~{\em 69\/}(1), 101--122.

\bibitem[\protect\citeauthoryear{Robins}{Robins}{2000}]{Robins:2000}
Robins, J.~M. (2000).
\newblock Robust estimation in sequentially ignorable missing data and causal
  inference models.
\newblock In {\em Proceedings of the American Statistical Association, Section
  on Bayesian Statistical Science}, pp.\  6--10. Alexandria, VA: American
  Statistical Association.

\bibitem[\protect\citeauthoryear{Robins and Rotnitzky}{Robins and
  Rotnitzky}{2001}]{Robins/Rotnitzky:2001}
Robins, J.~M. and A.~Rotnitzky (2001).
\newblock Comment on ``{I}nference for semiparametric models: some questions
  and an answer''.
\newblock {\em Statistical sinica\/}~{\em 11}, 920--936.

\bibitem[\protect\citeauthoryear{Robins, Rotnitzky, and van~der Laan}{Robins
  et~al.}{2000}]{Robins/et/al:2000c}
Robins, J.~M., A.~Rotnitzky, and M.~J. van~der Laan (2000).
\newblock Comment on the {M}urphy and {V}an der {V}aart article ``{O}n profile
  likelihood''.
\newblock {\em Journal of the American Statistical Association\/}~{\em 95},
  431--435.

\bibitem[\protect\citeauthoryear{Rubin}{Rubin}{1978}]{Rubin:1978}
Rubin, D.~B. (1978).
\newblock Bayesian inference for causal effects: the role of randomization.
\newblock {\em Annals of Statistics\/}~{\em 6\/}(1), 34--58.

\bibitem[\protect\citeauthoryear{Scharfstein, Rotnitzky, and
  Robins}{Scharfstein et~al.}{1999}]{Scharfstein/et/al:1999}
Scharfstein, D.~O., A.~Rotnitzky, and J.~M. Robins (1999).
\newblock Adjusting for nonignorable drop-out using semiparametric nonresponse
  models.
\newblock {\em Journal of the American Statistical Association\/}~{\em
  94\/}(448), 1096--1120.
\newblock (with rejoinder 1135-1146).

\bibitem[\protect\citeauthoryear{van~der Laan and Robins}{van~der Laan and
  Robins}{2003}]{VanDerLaan/Robins:2003}
van~der Laan, M. and J.~M. Robins (2003).
\newblock {\em Unified methods for censored longitudinal data and causality}.
\newblock Berlin: Springer.

\bibitem[\protect\citeauthoryear{Zhang, Zhou, Cao, and Zhang}{Zhang
  et~al.}{2016}]{Zhang/et/al:2012}
Zhang, Z., J.~Zhou, W.~Cao, and J.~Zhang (2016).
\newblock Causal inference with a quantitative exposure.
\newblock {\em Statistical Methods in Medical Research\/}~{\em 25\/}(1),
  315--335.

\end{thebibliography}

\end{document}